\documentclass[10pt]{article}

\usepackage{graphicx}%
\usepackage{amsmath,amssymb,amsthm,upref,bm}%
\usepackage{labelfig}%
\usepackage{mathrsfs} 

\textheight=204mm \textwidth=134mm \voffset=-6mm \hoffset=-6mm


\DeclareMathAlphabet{\varmathbb}{U}{pxsyb}{m}{n}

\newtheorem{proposition}{Proposition}%

\newcommand{\D}{\mathrm{d}\kern0.2pt}%
\newcommand{\ii}{\kern0.05em\mathrm{i}\kern0.05em}%
\newcommand{\RR}{\mathbb{R}}%

\begin{document}

\baselineskip=4.4mm

\makeatletter

\title{\bf Modified Babenko's equation for periodic gravity waves on water of finite
depth}

\author{Evgueni Dinvay$^1$ and Nikolay Kuznetsov$^2$}

\date{}

\maketitle

\vspace{-10mm}

\begin{center}
$^1$Department of Mathematics, University of Bergen, All\'egaten 41, N-5020 Bergen,
\\ $^2$Laboratory for Mathematical Modelling of Wave Phenomena, \\ Institute for
Problems in Mechanical Engineering, Russian Academy of Sciences, \\ V.O., Bol'shoy
pr. 61, St. Petersburg 199178, Russian Federation \\ E-mail:
Evgueni.Dinvay@uib.no\,; nikolay.g.kuznetsov@gmail.com
\end{center}

\begin{abstract}
\noindent A new operator equation for periodic gravity waves on water of finite
depth is derived and investigated; it is equivalent to Babenko's equation considered
in \cite{KD}. Both operators in the proposed equation are nonlinear and depend on
the parameter equal to the mean depth of water, whereas each solution defines a
parametric representation for a symmetric free surface profile. The latter is a
component of a solution of the two-dimensional, nonlinear problem describing steady
waves propagating in the absence of surface tension. Bifurcation curves (including a
branching one) are obtained numerically for solutions of the new equation; they are
compared with known results.
\end{abstract}

\setcounter{equation}{0}

\section{Introduction}

In this note, we consider the nonlinear problem describing steady, periodic waves on
water of finite depth in the absence of surface tension. In its simplest form, this
problem concerns the two-dimensional, irrotational motion of an inviscid,
incompressible, heavy fluid, say water, bounded above by a free surface and below by
a rigid horizontal bottom. It is convenient to formulate the problem in the
following non-dimensional form:
\begin{eqnarray}
&& \psi_{xx} + \psi_{yy} = 0, \quad (x,y) \in D; \label{lapp} \\ && \psi (x, -h) =
-Q , \quad x \in \RR; \label{bcp} \\ && \psi (x, \eta (x)) = 0, \quad x \in \RR;
\label{kcp} \\ && |\nabla \psi (x, \eta (x))|^2 + 2 \eta (x) = \mu , \quad x \in \RR . 
\label{bep}
\end{eqnarray}
Here $D = \{ x \in \RR, -h < y < \eta (x) \}$ is the longitudinal section of the
water domain, say infinite channel of uniform rectangular cross-section; $Q$ (the
rate of flow per channel's unit span) and $h$ (the mean depth of flow) are given
positive constants, whereas $\mu$, $\eta$ and $\psi$ must be found from relations
(\ref{lapp})--(\ref{bep}) so that the constant $\mu$ is positive and the $2
\pi$-periodic and even function $\eta (x)$ (the free surface profile) is
continuously differentiable and satisfies the condition
\begin{equation}
\int_{-\pi}^\pi \eta (x) \, \D x = 0 . \label{eta}
\end{equation}
Furthermore, $\psi (x, y)$ (the stream function) is a $2 \pi$-periodic and even
function of $x$ belonging to the class $C^1 (\bar{D}) \cap C^2 (D)$.

In \cite{KD}, section 2.2, one finds a procedure describing how the above statement
arises from that formulated in terms of dimensional variables and parameters (see,
for example, \cite{Ben}, pp.~340--341, for the latter statement). According to this
procedure, $\mu = 2 \pi c^2 / (g \ell)$ is the Froude number squared with
$g$\,---\,the acceleration due to gravity, $c$\,---\,the mean velocity of flow and
$\ell$\,---\,the wavelength (the smallest period of $\eta$). Moreover, $\mu / 2$ is
the upper bound for $\eta$; it is independent of $h$ and the equality is achieved
only for the wave with the Lipschitz crest; see \cite{CN}.

To explain the aim of this paper, let us outline some previous results on the
problem under consideration. In the first of two brief notes \cite{B,BB} published
shortly before his death, Babenko derived the pseudo-differential operator equation
now named after him. This equation is equivalent to the version of problem
\eqref{lapp}--\eqref{eta} describing waves on infinitely deep water in the following
sense. Every problem's solution defines a solution of the operator equation and vise
versa, to each even equation's solution there corresponds a symmetric solution of
the free-boundary problem. The second note \cite{BB} deals with the existence of
small solutions to the operator equation bifurcating from the trivial one. These
results are also outlined in the book \cite{OS}, section 3.7. Babenko's equation can
be written in the following equivalent form
\begin{equation}
\mu \, \mathcal{C} (v') = v + v \, \mathcal{C} (v') + \mathcal{C} (v' v) , \quad t
\in (- \pi, \pi) \, , \label{bid}
\end{equation}
which was studied in detail by Buffoni, Dancer and Toland in their comprehensive
articles \cite{BDT1,BDT2}. Here, the parameter $\mu$ (the same as in the Bernoulli
equation) must be found along with an even, $2 \pi$-periodic function $v (t)$ from
the Sobolev space $W^{1,2} (-\pi, \pi)$; the prime $'$ denotes differentiation with
respect to $t$ and ${\cal C}$ is the $2 \pi$-periodic Hilbert transform; see, for
example, \cite{Z}.

In our previous article \cite{KD}, we extended the approach developed in \cite{B} to
the case when the water has finite depth by deducing a pseudo-differential equation
analogous to \eqref{bid} and equivalent to problem \eqref{lapp}--\eqref{eta} in the
same sense as described above. The equation obtained in \cite{KD} is as follows:
\begin{equation}
\mu \, \mathcal{B}_r (v') = v + v \, \mathcal{B}_r (v') + \mathcal{B}_r (v' v) , \quad t
\in (- \pi, \pi) \, . \label{37}
\end{equation}
Here the operator $\mathcal{B}_r$, $r \in (0, 1)$, is defined on the space
$L^2_{per} (-\pi, \pi)$ of periodic functions by linearity from the relations
\begin{equation}
\mathcal{B}_r (\cos n t) = \frac{1 + r^{2 n}}{1 - r^{2 n}} \sin n t \ \ \mbox{for} \
n \geq 0 , \quad \mathcal{B}_r (\sin n t) = - \frac{1 + r^{2 n}}{1 - r^{2 n}} \cos n
t \ \ \mbox{for} \ n \geq 1 , \label{HTB_r}
\end{equation}
and so $\mathcal{B}_r$ is similar to the $2 \pi$-periodic Hilbert transform ${\cal
C}$ defined by these formulae with $r=0$; see \cite{Z}. Hence \eqref{37} is
analogous to \eqref{bid} and it is natural to refer to \eqref{37} as Babenko's
equation for water of finite depth. In this equation, the parameter $r$ arises from
some auxiliary conformal mapping (see \cite{KD}, section 3.1) and can be referred to
as the conformal radius of $D$.

Let us specify the mentioned equivalence of problem (\ref{lapp})--(\ref{eta}) and
equation \eqref{37}. If $(\mu, \eta, \psi)$ is a solution of the problem, then it
defines some $r$ and $v$ so that $(\mu, v)$ satisfies \eqref{37}. On the contrary,
if $(\mu, v)$ is a solution of \eqref{37} for some $r$ with $2 \pi$-periodic and
even $v$, then there exist $h$, $Q$ and $(\eta, \psi)$ such that $(\mu, \eta, \psi)$
is a solution of (\ref{lapp})--(\ref{eta}) with $h$ as the mean depth of the water
domain $D$ and $-Q$ standing on the right-hand side of \eqref{bcp}. However, it is
not clear how $h$ corresponds to $r$ and this is a drawback of equation \eqref{37}.

Another version of Babenko's equation for water of finite depth was obtained by
Constan\-tin, Strauss and V\u{a}rv\u{a}ruc\u{a} \cite{CSV} (see Remark~4 in their
article). It involves a pseudo-differential operator whose definition is similar to
\eqref{HTB_r} with another multiplier replacing $(1 + r^{2 n}) / (1 - r^{2 n})$.
Instead of $r$ on which $\mathcal{B}_r$ depends, the alternative operator depends on
the so-called conformal depth. Like $r$, this parameter has no direct physical
interpretation.

Therefore, the aim of this note is to propose and analyse an operator equation
equivalent to \eqref{37}, but having the advantage that the operators involved in it
depend on the mean depth of water $h$. Of course, the cost of this advantage is that
the operators in the new equation are nonlinear, whereas $\mathcal{B}_r \, \D / \D
t$ is linear as well as the operator used in the equation obtained in \cite{CSV}.
Our new equation has the form similar to that of \eqref{37} and is derived with the
help of a transformation based on the spectral decomposition of the self-adjoint
operator $\mathcal{B}_r \, \D / \D t$; see (4.3) in \cite{KD} and
\eqref{inverse_spectralBabenko} below.

\vspace{-2mm}

\section{Modified Babenko's equation}

We begin with reminding an assertion obtained in \cite{KD}, section~4.1; $v \in
W^{1,2} (-\pi, \pi)$ is an even, $2 \pi$-periodic solution of \eqref{37} if and only
if $w \in W^{1,2} (0, \pi)$\,---\,the restriction of $v$ to $(0,
\pi)$\,---\,satisfies the equation\\[-3mm]
\begin{equation}
\mu \mathcal J_r w = w + w \mathcal J_r w + \frac 12 \mathcal J_r (w^2) \, , \ \
\mbox{where} \ \ \mathcal J_r = \sum_{n = 1}^{\infty} \lambda_n P_n \, .
\label{spectralBabenko}
\end{equation}
Here $P_n$ is the projector onto the subspace of $L^2 (0, \pi)$ spanned by $\cos
nt$, $n = 0,1,\dots$, and
\begin{equation}
\lambda_n = n \frac{ 1 + r^{2n} }{ 1 - r^{2n} } \, , \quad n = 1,2,\dots \, ,
\label{lambda_n}
\end{equation}
is the corresponding eigenvalue of $B_r \D / \D t$.

The equivalence of \eqref{spectralBabenko} and \eqref{37} follows from the fact that
$\mathcal J _r w = \mathcal B_r (v')$ almost everywhere on $(0, \pi)$, which is a
consequence of the spectral decomposition of $\mathcal B_r \D / \D t$. Indeed, the
latter operator (it is present in all terms of \eqref{37} except for the first one
on the right-hand side) is self-adjoint on $L^2_{per} (-\pi, \pi)$ and its sequence
of eigenvalues is \eqref{lambda_n}.

Another equivalent form of equation \eqref{37} was proposed in \cite{KD} (see
section~4.1 of this paper), for which purpose the bounded operator\\[-3mm]
\[ \mathcal L_r = \sum_{n = 0}^{\infty} \mu_n P_n
\]
was introduced. Here $\mu_0 = 1$ and\\[-3mm]
\begin{equation}
\mu_n = \lambda_n^{-1} = \frac{1 - r^{2n}}{n (1 + r^{2n})} \quad \mbox{for} \ 
n=1,2,\dots \, .
\label{mu_n}
\end{equation}
Hence $\mathcal L_r$ is invertible and ${\mathcal L_r}^{-1} = P_0 + \mathcal J_r$;
that is, $\mathcal L_r \mathcal J_r = I - P_0$, where $I$ is the identity operator.
Applying $\mathcal L_r$ to both sides of \eqref{spectralBabenko}, one obtains
\begin{equation}
\mu (I - P_0) w = \mathcal L_r w + \mathcal L_r ( w \mathcal J_r w ) + \frac 12 (I -
P_0) w^2 \, , 
\label{inverse_spectralBabenko}
\end{equation}
which is equivalent to \eqref{spectralBabenko}, and so to \eqref{37}. The advantage
of \eqref{inverse_spectralBabenko} is that the unbounded operator $\mathcal J_r$
appears only in the middle term on the right-hand side. This version of Babenko's
equation is essential for our further considerations.

\subsection{Derivation of the modified Babenko's equation}

First, we show that $h$ can be used as a parameter replacing $r$ in the operators
analogous to $\mathcal L_r$ and $\mathcal J_r$; they are involved in the modified
version of equation \eqref{inverse_spectralBabenko} whose solution we also denote
by $w$ for the reason that will become clear below.

Let $v$ be the even extension from $(0, \pi)$ to $(-\pi, \pi)$ of a solution to
\eqref{inverse_spectralBabenko}, and so $v$ solves \eqref{37}. Let $\{ b_0,\ b_1,\
\dots \} \subset \RR$ denote the sequence of its modified Fourier coefficients
uniquely defined by the expansion\\[-3mm]
\begin{equation}
v(t) = b_0 + \sum_{k=1}^\infty b_k \left( 1 - r^{2 k} \right) \cos kt \, , \quad t
\in (-\pi, \pi) \, .
\label{first}
\end{equation}
It should be noted that $b_0 = P_0 w$, whereas there is no such a simple relation
between $b_n$ and $P_n w$ for $n \geq 1$. Hence, it is convenient to use the same
notation for the mean value of a constant function and its projection.

On an auxiliary $u$-plane, we consider the following annular domain with a cut:
\[ A_r = \{ r < |u| < 1 ; \ \Re\,u \notin (-1 , -r) \ \mbox{when} \ \Im\,u =0 \} \]
(see \cite{KD}, Fig.~2). Using the coefficients in \eqref{first} we
define\\[-3mm]
\begin{equation} 
z (u) = \ii \Big[ \log u + b_0 + \sum_{k=1}^\infty b_k \left( u^k - r^{2 k} u^{-k}
\right) \Big] \, , \label{z_u'}
\end{equation}
which is a holomorphic function on $A_r$. It maps $A_r$ conformally onto a wave-like
domain $D_{2 \pi}$ on the $z$-plane. (This is a consequence of the boundary
correspondence principle; see, for example, \cite{E}, ch.~5, Theorem~1.3, whereas
figures~2--6 in \cite{KD}, Section~3.3, illustrate how to check this principle
numerically.) The horizontal extent of this domain is $2 \pi$ and\\[-6mm]
\begin{eqnarray}
&& \ \ \ \ \ \ \ \ \ z (e^{it}) = x(t) + \ii v(t) , \quad t \in (-\pi, \pi) ,
\label{surface_parametrization} \\ &&
x (t) = - t - \sum_{k=1}^\infty b_k \left( 1 + r^{2 k} \right) \sin kt = -t -
(\mathcal B_r v) (t) 
\label{x_t}
\end{eqnarray}
is the parametric representation of its upper profile. Putting $u=r$ in
\eqref{z_u'}, we see that the result has a constant imaginary part, say $-\hslash$,
which gives the level of the horizontal bottom of $D_{2 \pi}$:
\begin{equation}
\hslash = - b_0 - \log r . \label{hslash}
\end{equation}

Let us show that $\hslash$ is equal to some $h > 0$ used in the formulation of
problem (\ref{lapp})--(\ref{eta}), which means that $D_{2 \pi}$ has the same depth
as $D$. Indeed, this is the case when the free surface profile given by
\eqref{surface_parametrization} and \eqref{x_t} has the zero mean value and $\hslash
> 0$. Our proof of these facts we begin by noticing that $P_0 \mathcal J_r = 0$,
which follows from the definition of $\mathcal J_r$. Moreover, for solutions of
\eqref{inverse_spectralBabenko} we have the following equality:
\begin{equation}
P_0 (w + w \mathcal J_r w) = 0 . \label{P_0}
\end{equation}
Since $w$ is also a solution of \eqref{spectralBabenko}, we see that \eqref{P_0}
follows by applying $P_0$ to both sides of \eqref{spectralBabenko}. Furthermore, $v$
is the even extension of $w$ and $\mathcal J _r w = \mathcal B_r v'$ almost
everywhere on $(0, \pi)$. Therefore, \eqref{P_0} can be written in the form
\begin{equation}
\int_0^{\pi} [ v + v \mathcal B_r (v') ] \, \D t = 0 \quad  \Longleftrightarrow
\quad \int_{-\pi}^{\pi} [ v + v \mathcal B_r (v') ] \, \D t = 0 \, .
\label{mean_value}
\end{equation}
Here, the last equality is a consequence of the fact that $\mathcal B_r (v')$ is
even when $v$ is even. Now, formula \eqref{x_t} implies that the second integral in
\eqref{mean_value} is equal to
\[ -\int_{-\pi}^{\pi} v (t) x' (t) \D t \, ,
\]
and so the free surface profile has the zero mean value.

It remains to prove that $P_0 w \leq 0$ for solutions of
\eqref{inverse_spectralBabenko}; indeed, in view of \eqref{first} and \eqref{hslash}
this inequality implies that $\hslash = - [ \log r + P_0 w] > 0$ provided $0 < r <
1$. According to \eqref{P_0}, the inequality $P_0 w \leq 0$ is equivalent to the
following one:
\[ P_0 (w \mathcal J_r w) = \frac{1}{\pi} \int_0^\pi w (t) (\mathcal J_r w) (t) \, 
\D t \geq 0 \, .
\]
Since $\mathcal J_r$ is a positive definite operator (this follows from the fact
that all its eigenvalues are positive), the last integral is positive which
completes the proof.

From now on, we write $h$ instead of $\hslash$ and consider it as the parameter on
which will depend the operators we are going to defined. First, we consider the
nonlinear functional
\begin{equation}
r_h (w) = \exp \{- h - P_0 w \}  \label{r_h}
\end{equation}
(cf. \eqref{hslash} above). Changing $r$ to this functional in formula
\eqref{lambda_n} (it gives the sequence of eigenvalues of $\mathcal J_r$), we obtain
the following functionals
\begin{equation}
\lambda_n^{(h)} (w) = n \frac{ 1 + [r_h (w)]^{2n} }{ 1 - [r_h (w)]^{2n} } , \quad n
= 1,2,\dots \, , \label{lambda_n_w}
\end{equation}
all of which are well defined provided $P_0 w \neq - h$. Changing $\{ \lambda_n
\}_{n=1}^\infty$ to these functionals in the definition of $\mathcal J_r$, we
introduce the following nonlinear operator:
\[ \mathcal J_h w = \sum_{n = 1}^{\infty} \Big[ \lambda_n^{(h)} (w) \Big] P_n w , \quad
w \in W^{1,2} (0, \pi) , \ \ P_0 w > - h .
\]
In the same way, we define on $L^2(0, \pi)$ the nonlinear operator:
\[ \mathcal L_h w = P_0 w + \sum_{n = 1}^{\infty} \Big[ \mu_n^{(h)} (w) \Big] P_n w 
\, , \ \ \mbox{where} \ \ \mu_n^{(h)} (w) = \frac{1 - [r_h (w)]^{2n}}{n \{ 1 + [r_h
(w)]^{2n} \}} \, .
\]
It should be noted that
\begin{equation}
\lambda_n^{(h)} (w) = \lambda_n^{(h)} (P_0 w) \quad \mbox{and} \quad \mu_n^{(h)} (w)
= \mu_n^{(h)} (P_0 w) \quad \mbox{for every} \ n = 1,2,\dots \, ,
\label{mu_of_mean_value}    
\end{equation}
because in view of \eqref{r_h} we have that $r_h (w) = r_h (P_0 w)$. Of course, the
first of relations \eqref{mu_of_mean_value} is true only when $P_0 w \neq - h$,
whereas the last one holds for all $w \in L^2 (0, \pi)$.

In terms of operators $\mathcal J_h$ and $\mathcal L_h$ defined for every $h > 0$,
we write down the equation:
\begin{equation}
\mu (1 - P_0) w = \mathcal L_h w - \mathcal L_h ( - w \mathcal J_h w ) + \frac 12 (1
- P_0) (w^2) \, . \label{mod_Babenko}
\end{equation}
Since it is similar to \eqref{inverse_spectralBabenko}, we will refer to
\eqref{mod_Babenko} as the {\it modified Babenko's equation}. The reason for this
becomes clear from the following assertion.

\begin{proposition}
Let $(\mu, w)$ with $\mu > 0$ and $w \in W^{1,2} (0, \pi)$ be a solution of
\eqref{inverse_spectralBabenko} for some fixed value of $r \in (0, 1)$. Then $w$
belongs to the domain of $\mathcal J_h$ with $h = - \log r - P_0 w > 0$ and the pair
$(\mu, w)$ satisfies \eqref{mod_Babenko}.

On the contrary, let $\mu > 0$ and $w \in W^{1,2} (0, \pi)$ solve
\eqref{mod_Babenko} with $h > 0$, and let $P_0 w > - h$. Then $(\mu, w)$ is a
solution of \eqref{inverse_spectralBabenko} with $r = \exp \{- h - P_0 w \} \in (0,
1)$.
\end{proposition}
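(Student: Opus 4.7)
The plan is to exploit a single clean observation: once the parameters $r$ and $h$ are linked via $h = -\log r - P_0 w$, a direct calculation gives $r_h(w) = \exp\{-h - P_0 w\} = r$. Consequently the sequences $\lambda_n^{(h)}(w)$ and $\mu_n^{(h)}(w)$ reduce to the constants $\lambda_n$ and $\mu_n$ of \eqref{lambda_n} and \eqref{mu_n}, so that the nonlinear operators $\mathcal{J}_h$ and $\mathcal{L}_h$ collapse\,---\,on the specific arguments appearing in \eqref{mod_Babenko}\,---\,to the linear operators $\mathcal{J}_r$ and $\mathcal{L}_r$ used in \eqref{inverse_spectralBabenko}. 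The equivalence then reduces to the trivial linearity identity $-\mathcal{L}_r(-w\,\mathcal{J}_r w) = \mathcal{L}_r(w\,\mathcal{J}_r w)$.

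For the forward direction I would proceed as follows. Given $(\mu, w)$ solving \eqref{inverse_spectralBabenko}, the argument already displayed in the excerpt\,---\,applying $P_0$ to \eqref{spectralBabenko} and using the positive-definiteness of $\mathcal{J}_r$\,---\,yields $P_0 w \le 0$, hence $h := -\log r - P_0 w > 0$, and the computation above gives $r_h(w)=r$, so $w$ lies in the domain of $\mathcal{J}_h$. Next, invoke \eqref{P_0} to obtain $P_0(-w\,\mathcal{J}_r w) = P_0 w$. By \eqref{mu_of_mean_value} this means that the two arguments $w$ and $-w\,\mathcal{J}_h w$ at which $\mathcal{L}_h$ is evaluated in \eqref{mod_Babenko} share the same mean, so in both places $\mu_n^{(h)}(\cdot) = \mu_n$. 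Substituting and using linearity of $\mathcal{L}_r$ rewrites \eqref{mod_Babenko} exactly as \eqref{inverse_spectralBabenko}.

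For the backward direction the only genuinely new point is that \eqref{P_0} is not available at the start. Set $r := \exp\{-h - P_0 w\}$; the hypothesis $P_0 w > -h$ combined with $h > 0$ gives $r \in (0,1)$, and $r_h(w) = r$, so $\mathcal{J}_h w = \mathcal{J}_r w$ and $\mathcal{L}_h w = \mathcal{L}_r w$. The substitute for \eqref{P_0} is obtained by projecting \eqref{mod_Babenko} onto the constants: the left-hand side vanishes because of the factor $(1-P_0)$, the last term vanishes for the same reason, and since $P_0 \mathcal{L}_h f = P_0 f$ for every $f$ one is left with $0 = P_0 w - P_0(-w\,\mathcal{J}_h w)$. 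Thus both arguments of $\mathcal{L}_h$ in \eqref{mod_Babenko} again share the mean value $P_0 w$, the nonlinear operators collapse to $\mathcal{L}_r$ and $\mathcal{J}_r$ exactly as before, and \eqref{mod_Babenko} turns into \eqref{inverse_spectralBabenko}.

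The only real subtlety\,---\,and the step that deserves the most care\,---\,is precisely this matching of mean values of the two arguments at which the nonlinear $\mathcal{L}_h$ is applied inside \eqref{mod_Babenko}. It is supplied by \eqref{P_0} in one direction and by a single $P_0$-projection of the equation in the other; once it is in hand, nothing beyond the bookkeeping of signs and of \eqref{mu_of_mean_value} remains. Verifying $r \in (0,1)$ and $h > 0$ at the appropriate places is then immediate from the sign of $P_0 w$ and the hypothesis $P_0 w > -h$.
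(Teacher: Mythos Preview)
Your proposal is correct and follows essentially the same route as the paper's own proof: in both directions one first checks $r_h(w)=r$ so that $\mathcal{J}_h w=\mathcal{J}_r w$ and $\mathcal{L}_h w=\mathcal{L}_r w$, and then handles the remaining term $\mathcal{L}_h(-w\,\mathcal{J}_r w)$ by verifying that its argument shares the mean value $P_0 w$ with $w$ (via \eqref{P_0} in the forward direction, and via applying $P_0$ to the equation in the backward direction), whereupon $\mathcal{L}_h$ collapses to $\mathcal{L}_r$ and the two equations coincide. If anything, you are more explicit than the paper about invoking \eqref{mu_of_mean_value} at the crucial step in the forward direction.
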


\begin{proof}
The fact that $h = - \log r - P_0 w > 0$ is already established under the assumption
that $(\mu, w)$ is a solution of equation \eqref{inverse_spectralBabenko}.
Substituting this $h$ into \eqref{r_h}, we see that $r_h (w) = r$, and so for all
$n=1,2,\dots$
\begin{equation}
\lambda_n^{(h)} (w) = \lambda_n \quad \mbox{and} \quad \mu_n^{(h)}(w) = \mu_n 
\label{mu_n*}
\end{equation}
are eigenvalues of the operators $\mathcal J_r$ and $\mathcal L_r$ respectively.
Furthermore, it is clear that $w$ belongs to the domain of $\mathcal J_h$, and so we
have
\[ \mathcal J_h w = \mathcal J_r w \quad \mbox{and} \quad \mathcal L_h w =
\mathcal L_r w
\]
for solutions of \eqref{inverse_spectralBabenko}.

Let us consider the following expression
\begin{equation}
\mu (1 - P_0) w - \mathcal L_h w + \mathcal L_h ( - w \mathcal J_h w ) - \frac 12
(1 - P_0) (w^2) \, ,
\label{expr}
\end{equation}
where $(\mu, w)$ is a solution of equation \eqref{inverse_spectralBabenko}. To prove
the first assertion of Proposition~1, we have to show that this expression vanishes
identically. Taking into account the obtained relations, \eqref{expr} reduces to:
\[ \mu (1 - P_0) w - \mathcal L_r w + \mathcal L_h ( - w \mathcal J_r w ) - \frac 12
(1 - P_0) (w^2) \, .
\]
Thus, to complete the proof we have to show that $\mathcal L_h ( - w \mathcal J_r w)
= - \mathcal L_r ( w \mathcal J_r w )$. Indeed, this is true because
\begin{eqnarray}
\mathcal L_h ( - w \mathcal J_r w ) = P_0 ( - w \mathcal J_r w ) + \sum_{n =
1}^{\infty} \Big[ \mu_n^{(h)} ( - w \mathcal J_r w ) \Big] P_n ( - w \mathcal J_r w)
\nonumber \\ = - P_0 ( w \mathcal J_r w ) + \sum_{n = 1}^{\infty} \mu_n P_n ( - w
\mathcal J_r w ) = - \mathcal L_r ( w \mathcal J_r w ) \, . \label{ch}
\end{eqnarray}
Here the second formula \eqref{mu_n*} and the definition of $\mathcal L_r$ are taken
into account. Substituting this into \eqref{expr}, we see that this expression
vanishes because $(\mu, w)$ is a solution of \eqref{inverse_spectralBabenko}, and so
the proof of the first assertion is complete.

To prove the second assertion we notice that the assumption $P_0 w > - h$ implies
that $r = \exp \{- h - P_0 w \} \in (0, 1)$ and $r_h (w) = r$ in view of
\eqref{r_h}. Therefore, we have
\begin{equation}
\lambda_n^{(h)} (w) = \lambda_n \quad \mbox{and} \quad \mu_n^{(h)}(w) = \mu_n \quad
\mbox{for all} \ n=1,2,\dots ,
\label{mu_n'}
\end{equation}
provided $w$ is a solution of \eqref{mod_Babenko}, and so $\mathcal J_h w = \mathcal
J_r w$ and $\mathcal L_h w = \mathcal L_r w$. (It should be emphasised that formulae
\eqref{mu_n'} and \eqref{mu_n*} only look exactly the same, but $w$ denotes a
solution of \eqref{inverse_spectralBabenko} in \eqref{mu_n*}, whereas $w$ is a
solution of \eqref{mod_Babenko} in \eqref{mu_n'}.) Hence equation
\eqref{mod_Babenko} reduces to
\begin{equation}
\mu (1 - P_0) w = \mathcal L_r w - \mathcal L_h ( - w \mathcal J_r w ) + \frac 12
(1 - P_0) (w^2) \, , \quad t \in (0, \pi) \, . \label{1_May}
\end{equation}
Applying $P_0$ to both sides of \eqref{1_May}, we obtain $P_0 w = P_0 ( - w \mathcal
J_r w )$. Using this and taking into account the chain of equalities \eqref{ch}
with $w$ denoting a solution of \eqref{mod_Babenko} (here the second formula
\eqref{mu_n'} is used along with the definition of $\mathcal L_r$), relation
\eqref{1_May} turns into equation \eqref{inverse_spectralBabenko} which completes
the proof of the second assertion.
\end{proof}

Small solutions of \eqref{mod_Babenko} exist because this equation is equivalent to
\eqref{inverse_spectralBabenko} which, in its turn, is equivalent to \eqref{37}. For
the latter equation, the existence of small solutions was established directly with
the help of the Crandall--Rabinowitz theorem (see \cite{CR}, Theorem~2). Therefore,
we restrict ourselves just to formulating the following assertion; see further
details in \cite{KD}, Section~3.2.

\begin{proposition}
For every $n=1,2,\dots$ there exists $\varepsilon_n > 0$ such that for $0 < |s| <
\varepsilon_n$ there is a family $\big( \mu_n^{(s)}, \, w_n^{(s)} \big)$ of
solutions to equation \eqref{mod_Babenko}. Together with the bifurcation point
$(\mu_n , 0)$, where $\mu_n$ is given by \eqref{mu_n}, the points of this family
form the continuous curve
\[ C_n = \big\{ \big( \mu_n^{(s)}, \, w_n^{(s)} (t) \big) : |s| < \varepsilon_n
\big\} \subset \RR \times W^{1,2} (0, \pi) , \quad n=1,2,\dots \, .
\]
Moreover, the asymptotic formulae
\begin{equation}
\mu_n^{(s)} = \mu_n + o (s) \, , \quad w_n^{(s)} (t) = s \cos n t + o (s)
\label{s_a}
\end{equation}
hold for these solutions as $|s| \to 0$. Finally, each curve $C_n$ is of class
$C^1$.
\end{proposition}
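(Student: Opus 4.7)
The plan is to apply the Crandall--Rabinowitz bifurcation theorem \cite{CR} directly to equation \eqref{mod_Babenko} at fixed $h>0$, with $\mu$ as the bifurcation parameter. One could instead transfer the analogous result obtained in \cite{KD} for equation \eqref{37} through Proposition~1, but that route is less clean: since \cite{KD} fixes the conformal radius $r$ while Proposition~1 links the two parameters by $h=-\log r - P_0 w$, the value of $h$ drifts with the bifurcation parameter $s$ along the curve, and pinning $h$ would require an additional implicit-function step to solve for $r=r(s)$.

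Rewrite \eqref{mod_Babenko} as $F(\mu,w)=0$ with
\[
F(\mu,w):=\mu(1-P_0)w - \mathcal L_h w + \mathcal L_h(-w\,\mathcal J_h w) - \tfrac12(1-P_0)(w^2),
\]
viewed as a map $\RR\times U\to W^{1,2}(0,\pi)$, where $U$ is the open neighbourhood of $0$ in $W^{1,2}(0,\pi)$ on which $P_0 w>-h$. The target is $W^{1,2}$ because $\mathcal L_h$ gains a factor $1/n$ in its multipliers while $\mathcal J_h:W^{1,2}\to L^2$ is bounded (multipliers $\lambda_n^{(h)}\sim n$), and multiplication is continuous $W^{1,2}\times L^2\to L^2$ (from $W^{1,2}\hookrightarrow L^\infty$ in one dimension) and $W^{1,2}\times W^{1,2}\to W^{1,2}$ (the 1D Sobolev algebra). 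Since $\mathcal L_h$ and $\mathcal J_h$ depend on $w$ only through the continuous functional $P_0 w$ via $r_h(w)=\exp\{-h-P_0 w\}$, these mapping properties make $F$ real-analytic on $\RR\times U$. All nonlinear terms being at least quadratic, the Fr\'echet derivative at $w=0$ reduces to
\[
F_w(\mu,0)\phi \,=\, \mu(1-P_0)\phi - \mathcal L_{e^{-h}}\phi.
\]

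Expanding $\phi=\sum_{k\ge 0}\phi_k\cos kt$ gives $F_w(\mu_n,0)\phi=-\phi_0+\sum_{k\ge 1}(\mu_n-\mu_k)\phi_k\cos kt$, with $\mu_k=\tanh(kh)/k$ (formula \eqref{mu_n} at $r=e^{-h}$). The sequence $k\mapsto\tanh(kh)/k$ is strictly decreasing on $\{1,2,\ldots\}$ (a one-line consequence of $\sinh(2x)>2x$ for $x>0$), so each $\mu_n$ is a simple characteristic value: $\ker F_w(\mu_n,0)$ is spanned by $\cos nt$ and the range is the closed codimension-one subspace $\{\psi\in W^{1,2}(0,\pi):P_n\psi=0\}$. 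Transversality is immediate because $F_{\mu w}(\mu_n,0)\cos nt = (1-P_0)\cos nt = \cos nt$ has non-zero $n$th Fourier coefficient and so lies outside the range. The Crandall--Rabinowitz theorem then yields the $C^1$ curve $C_n$ together with the expansion \eqref{s_a}. The main technical obstacle I foresee is the Fr\'echet-differentiability check for the cubic term $\mathcal L_h(-w\,\mathcal J_h w)$: one has to combine the nonlinear dependence of $\mathcal L_h$ and $\mathcal J_h$ on $P_0 w$ with the unbounded character of $\mathcal J_h$. Once the spectral multipliers are recognised as smooth scalar functions of $P_0 w$, however, this reduces to the bilinear mapping estimates already listed above.
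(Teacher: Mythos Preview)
Your argument is correct, and it is genuinely different from what the paper does. The paper gives no proof at all: it simply observes that equation \eqref{mod_Babenko} is equivalent to \eqref{inverse_spectralBabenko} and hence to \eqref{37} via Proposition~1, and then quotes the Crandall--Rabinowitz result already established in \cite{KD} for \eqref{37}. You instead apply Crandall--Rabinowitz directly to $F(\mu,w)=0$ with $h$ fixed, verify the Fredholm and transversality hypotheses by hand on the Fourier side, and use the strict monotonicity of $k\mapsto\tanh(kh)/k$ for simplicity of the kernel. Your route is in fact cleaner for exactly the reason you identify: the equivalence in Proposition~1 ties $h$ and $r$ together through $h=-\log r-P_0 w$, so the curve produced in \cite{KD} at fixed $r$ does not sit at a fixed $h$, and the paper glosses over the implicit-function argument needed to reparametrise. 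One small point you leave implicit: the asymptotic $\mu_n^{(s)}=\mu_n+o(s)$ in \eqref{s_a} is stronger than the bare $C^1$ conclusion of Crandall--Rabinowitz and requires $\mu'(0)=0$. This follows because $F_{ww}(\mu_n,0)[\cos nt,\cos nt]$ lies in the span of $1$ and $\cos 2nt$ (products of $\cos nt$ with itself only generate those harmonics), hence is orthogonal to $\cos nt$ and falls in the range of $F_w(\mu_n,0)$; it would be worth saying this in one line.
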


\subsection{Solutions of equation (23) define periodic waves}

Let us outline how to recover a solution of problem \eqref{lapp}--\eqref{bep} from
the pair $(\mu, w)$ that solves equation \eqref{mod_Babenko} for some $h > 0$.
According to Proposition~1, $(\mu, w)$ is also a solution of \eqref{spectralBabenko}
with $r = \exp \{- h - P_0 w \} \in (0, 1)$, and so $(\mu, v)$ is a solution of
\eqref{37} provided $v$ is the even extension of $w$ from $(0, \pi)$ to $(-\pi,
\pi)$. Then the considerations expounded at the beginning of Section~2.1 are
applicable to $v$, thus giving the parametric curve
\[ \{ x = - t - (\mathcal{B}_r \, v) (t) , \ \ y = v (t) ; \ \ t \in 
[-\pi, \pi] \} \, .
\]
It has the zero mean value, and so serves as the upper part of the boundary for the
wave domain $D_{2 \pi}$ (therefore, a part of free surface profile $\eta$). The
horizontal extent of $D_{2 \pi}$ is a wave period equal to $2 \pi$, whereas its
bottom is
\[ \{ x \in [-\pi, \pi] , \ \ y = -h \} \, .
\]

The next step is to show that $D_{2 \pi}$ serves as a part of $D$ for some periodic
wave; that is, there exists a $2 \pi$-periodic stream function $\psi$, which is
defined on $\overline {D_{2 \pi}}$ and satisfies conditions \eqref{bcp}--\eqref{bep}
with some constant serving as the right-hand side term in \eqref{bcp}, whereas $\mu$
stands in \eqref{bep}. To show the existence of $\psi$ one has to repeat literally
the considerations presented in \cite{KD} at the end of Section~3.3.

\section{Numerical results}

In this section, we describe a method aimed at solving equation \eqref{mod_Babenko}
numerically. Being based on calculation of the solution's Fourier coefficients $b_0,
b_1, \ldots$, it is similar to that applied for the numerical solution of
\eqref{inverse_spectralBabenko}; see \cite{KD}, Section~4. Namely, a modified
version of the software SpecTraVVave is applied (the latter is available freely at
the site cited as \cite{MVK}; its detailed description can be found in \cite{KMV}).

\subsection{Discretisation of equation (23)}

We apply the standard cosine collocation method in the same way as in \cite{KD}:
solutions are sought as linear combinations of $\cos mx$, $m = 0, 1, \dots$, which
form a basis in $L^2(0, \pi)$. For the discretisation we use the subspace $\mathcal
S_N$ spanned by the first $N$ cosines which are defined by their values at the
collocation points $x_n = \pi \frac{2n - 1}{2N}$, $n = 1, \ldots, N$. Thus, $f \in
W^{1,2} (0, \pi)$ is represented by the vector $f^N$ given by its coordinates
\[ f^N_n = \sum_{k=0}^{N-1} (P_k f) (x_n) , \quad n = 1, \dots, N .
\]
The operator $\mathcal L_h$ is discretised as follows:
\[ ( \mathcal L_h^N f^N )_n = \sum_{k=0}^{N-1} (P_k \mathcal L_h f) (x_n) , \quad
n = 1, \dots, N ,
\]
whereas $\mathcal J_h^N$ and $P_0^N$ denote the discretisations of $\mathcal J_h$
and $P_0$ respectively. It should be mentioned that $\mathcal L_h^N$ and $\mathcal
J_h^N$ are nonlinear, and so distinguish essentially from their counterparts in
\cite{KD}. However, this has only an insignificant influence resulting in a slight
increase of the computation time. These definitions are correct because $f^N$
defines the function $f$ with values $f(x_n) = f^N_n$ uniquely up to a projection on
the subspace orthogonal to $\mathcal S_N$.

The discrete analogue of \eqref{mod_Babenko} is as follows:
\begin{equation}
\mathcal L_h^N w^N - \mu \left( 1 - P_0^N \right) w^N - \mathcal L_h^N \left( - w^N
\mathcal J_h^N w^N \right) + \frac 12 \left( 1 - P_0^N \right) \left( w^N \right)^2
= 0 . \label{discrete_Babenko}
\end{equation}
Solutions $(\mu, w^N)$ of this equation constitute curves in the $(\mu, a)$-plane,
where
\[ a = \| w^N \| = \max_n |w^N_n| ,
\]
and we parametrise them in order to make the calculation procedure more effective.
Due to the presence of a new parameter, say $\theta$, we have that $\mu = \mu
(\theta)$ and $a = a (\theta)$. Substituting $\mu (\theta)$ into
\eqref{discrete_Babenko} instead of $\mu$, we complement this algebraic system by
the following equation:
\begin{equation}
\max _{n = 1, \ldots, N} |w^N_n| = a(\theta) . \label{waveheight}
\end{equation}
The resulting system \eqref{discrete_Babenko}--\eqref{waveheight} has $N + 1$
equations with the unknowns $\theta, w^N_1, \ldots, w^N_N$. The standard Newton's
iteration method is applicable for finding branches bifurcating from a trivial
solution, whereas the Crandall--Rabinowitz asymptotic formula \eqref{s_a} allows us
to find an initial approximation. Further details concerning the proposed
parametrisation and the particular realisation of this algorithm can be found in
\cite{KMV}.

\begin{figure}
\vspace{-4mm}
\begin{center}
\SetLabels
 \L (0.5*0.01) $\mu$\\
 \L (0.64*0.42) $C_1$\\
 \L (0.29*0.38) $C_2$\\
 \L (0.3*0.56) \tiny $C_{21}$\\
 \L (-0.05*0.48) $\| w \|_\infty$\\
\endSetLabels 
\leavevmode
\strut\AffixLabels{\includegraphics[width=80mm]{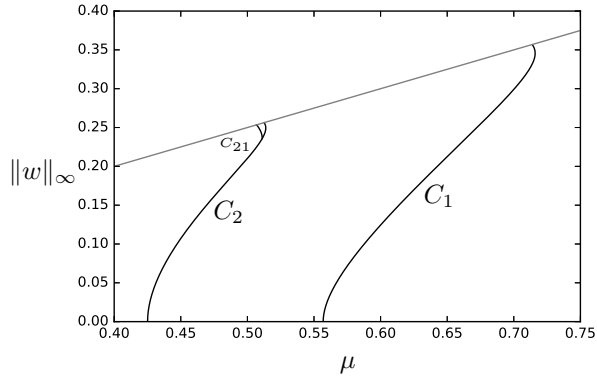}}
\end{center}
\vspace{-4mm}
\caption{The solution branches $C_1$ and $C_2$ for equation \eqref{mod_Babenko} with
$h = \pi /5$, bifurcating from the zero solution at $\mu$ approximately equal to
$0.55689$ and $0.42507$ respectively. The secondary solution branch denoted by
$C_{21}$ bifurcates from $C_2$ at $\mu \approx 0.51113$. The upper bound $\mu / 2$
mentioned prior to equation \eqref{37} is also included.} 
\vspace{-4mm}
\end{figure}

\subsection{Bifurcation curves for equation (23)}

Here we present numerical results obtained for equation \eqref{mod_Babenko} with $h
= \pi / 5$. The solution branches $C_1$ and $C_2$ (the principal ones) are plotted in
Fig.~1 in terms of $\mu$ and the solution norm $\|w\|_\infty$ in the space $L^\infty
(0, \pi)$. Let us describe some characteristics of $C_1$ and $C_2$ and compare them
with properties of solution branches obtained in \cite{CN} and \cite{KD}, where
other equations are used.

Branch $C_1$ bifurcates from the zero solution at $\mu \approx 0.55689$ and
terminates at the solution corresponding to the wave of extreme form for which
$\|w\|_\infty \approx 0.35686$. This branch has no secondary bifurcation points as
the analogous branches for equations \eqref{bid} (see \cite{BDT1,BDT2} for the
rigorous proof and detailed discussion) and \eqref{37} with $r = 4/5$ (see
\cite{KD}, Fig.~1). On the other hand, $C_2$ bifurcating from the zero solution at
$\mu \approx 0.42507$ has one secondary bifurcation point at $\mu \approx 0.51113$;
the same property has $C_3$ for \eqref{37} with $r = 4/5$ (see \cite{KD}, Fig.~9).

\begin{figure}
\begin{center}
\SetLabels
 \L (0.5*0.01) $c$\\
 \L (0.66*0.42) $C_1$\\
 \L (0.25*0.38) $C_2$\\
 \L (0.3*0.59) \tiny $C_{21}$\\
 \L (-0.05*0.48) $\| w \|_\infty$\\
\endSetLabels 
\leavevmode
\strut\AffixLabels{\includegraphics[width=80mm]{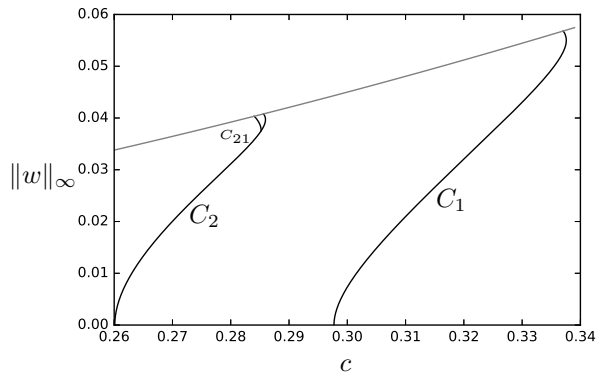}}
\end{center}
\vspace{-4mm} \caption{Plot of the solution branches $C_1$, $C_2$ and $C_{21}$ for
equation \eqref{mod_Babenko} with $h = \pi /5$ in terms of the variables used in
\cite{CN}; namely, the phase velocity $c$ and $\| w \|_\infty$. The upper bound $\mu
/ 2$ mentioned prior to equation \eqref{37} is also included.} \vspace{-4mm}
\end{figure}

\begin{figure}
\vspace{-6mm}
\begin{center}
\SetLabels
 \L (0.48*0.01) $\| w \|_\infty$\\
 \L (0.02*0.5) $r$\\
\endSetLabels 
\leavevmode
\strut\AffixLabels{\includegraphics[width=80mm]{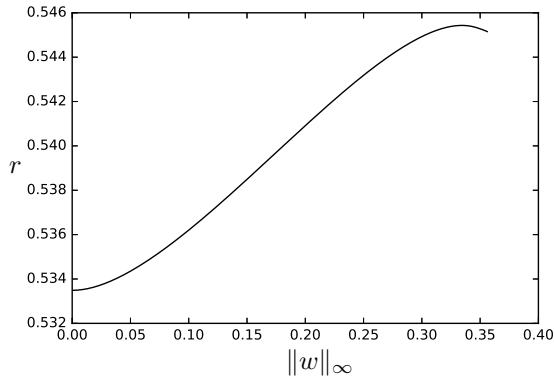}}
\end{center}
\vspace{-4mm} \caption{Plot of values of the parameter $r$ corresponding to
solutions which belong to the branch $C_1$ in Fig.~1 and are calculated for $h =
\pi /5$.} \vspace{-4mm}
\end{figure}

Moreover, both $C_1$ and $C_2$ exhibit the phenomenon of a turning point, occurring
high on each of these branches. The corresponding largest values of $\mu$ are
approximately equal to $0.71604$ for $C_1$ and to $0.51381$ for $C_2$; the
$L^\infty$-norms of the corresponding solutions are approximately equal to $0.34553$
for $C_1$ and to $0.24935$ for $C_2$. The meaning of turning points is that the
fastest traveling waves of given periods correspond to them. This phenomenon is
related to the `Tanaka instability' found numerically by Tanaka \cite{Tan}, and
later investigated analytically in \cite{Saf}. By means of a different method this
property was demonstrated in \cite{CN}, whereas our method shows that it also takes
place for equation \eqref{mod_Babenko} on $C_1$ and $C_2$.

Plot of the bifurcation curves $C_1$, $C_2$ and $C_{21}$ rescaled from our variables
$\mu$ and $\| w \|_\infty$ to those used in \cite{CN} is given in Fig.~2. It
demonstrates a good agreement with the results obtained by virtue of the numerical
technique introduced in \cite{CS} and based on the Taylor expansion of the
Dirichlet--Neumann operator in homogeneous powers of the surface elevation $\eta$.

In Fig.~3, we give a plot of values attained by $r$ when solutions of
\eqref{mod_Babenko} (they are calculated for $h = \pi /5$) run over the branch~$C_1$
shown in Fig.~1. These values obtained with the help of the nonlinear functional
$r_h (w)$ (see formula \eqref{r_h}) have a remarkable feature: the dependence of $r$
on the solution's norm $\| w \|_\infty$ is not monotonic when the latter varies on
the interval $(0, 0.35686)$. The last value is approximately equal to the
$L^\infty$-norm of the solution corresponding to the wave of extreme form. The
single maximum of this curve approximately equal to $0.54543$ is attained at $\| w
\|_\infty \approx 0.33433$. The latter value is slightly less than the norm
corresponding to the turning point solution; namely, $\approx 0.34553$.

\begin{figure}
\begin{center}
\SetLabels
 \L (0.5*0.01) $\mu$\\
 \L (0.64*0.5) $C_5$\\
 \L (0.29*0.62) $C_{51}$\\
 \L (0.46*0.7) $C_{52}$\\
 \L (0.56*0.72) $C_{53}$\\
 \L (-0.08*0.48) $\| w \|_\infty$\\
\endSetLabels 
\leavevmode
\strut\AffixLabels{\includegraphics[width=80mm]{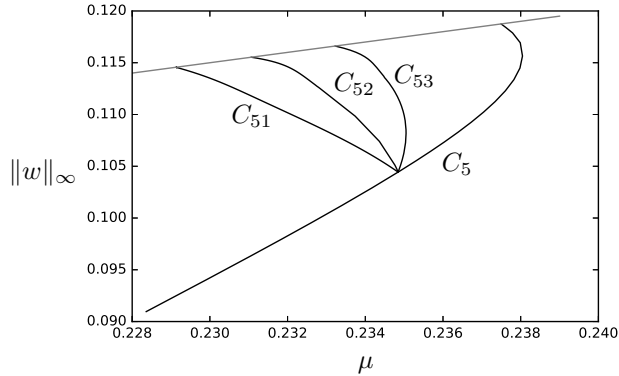}}
\end{center}
\vspace{-5mm} \caption{The upper part of the solution diagram $C_5$ is plotted for
equation \eqref{mod_Babenko} with $h = \pi /5$. Three secondary branches bifurcating
from $C_5$ at $\mu \approx 0.23484$ and $\| w \|_\infty \approx 0.10444$ are denoted
by $C_{51}$, $C_{52}$ and $C_{53}$. The upper bound $\mu / 2$ mentioned prior to
equation \eqref{37} is included.} \vspace{-4mm}
\end{figure}

Now we turn to Fig.~4, where the upper part of the solution diagram $C_5$ is
plotted. This curve bifurcates from the zero solution at $\mu \approx 0.19925$,
whereas there are three secondary branches (denoted by $C_{51}$, $C_{52}$ and
$C_{53}$), which, within the accuracy of our computations, bifurcate from $C_5$ at
the point with $\mu \approx 0.23484$ and $\| w \|_\infty \approx 0.10444$.
Unfortunately, the adopted accuracy is insufficient to check whether there is a
single bifurcation point or there are two or even three of them, but it is extremely
difficult to improve accuracy. Moreover, it should be emphasised that the accurate
method developed by Clamond and Dutykh \cite{CD} for computation of steady waves
does not work for waves having more than one crest per period, and so cannot be
implemented close to bifurcation points. Indeed, there are five crests per period on
every wave profile corresponding to a solution belonging to either of branches
bifurcating from $C_5$; see Figs.~5--7. Moreover, the highly accurate method based
on group theoretic technique, which was applied by Aston \cite{A} for computation of
bifurcation points in the simpler case of the infinitely deep water, yields only two
these points on the fifth branch of solutions. Thus, either the third secondary
branch is absent/lost in \cite{A} or two of these branches bifurcate from a single
point on $C_{5}$ when the water is infinitely deep.

\begin{figure}[b]
\vspace{-4mm}
\begin{center}
\SetLabels
 \L (0.5*0.01) $x$\\
 \L (-0.005*0.48) $\eta$\\
\endSetLabels 
\leavevmode
\strut\AffixLabels{\includegraphics[width=80mm]{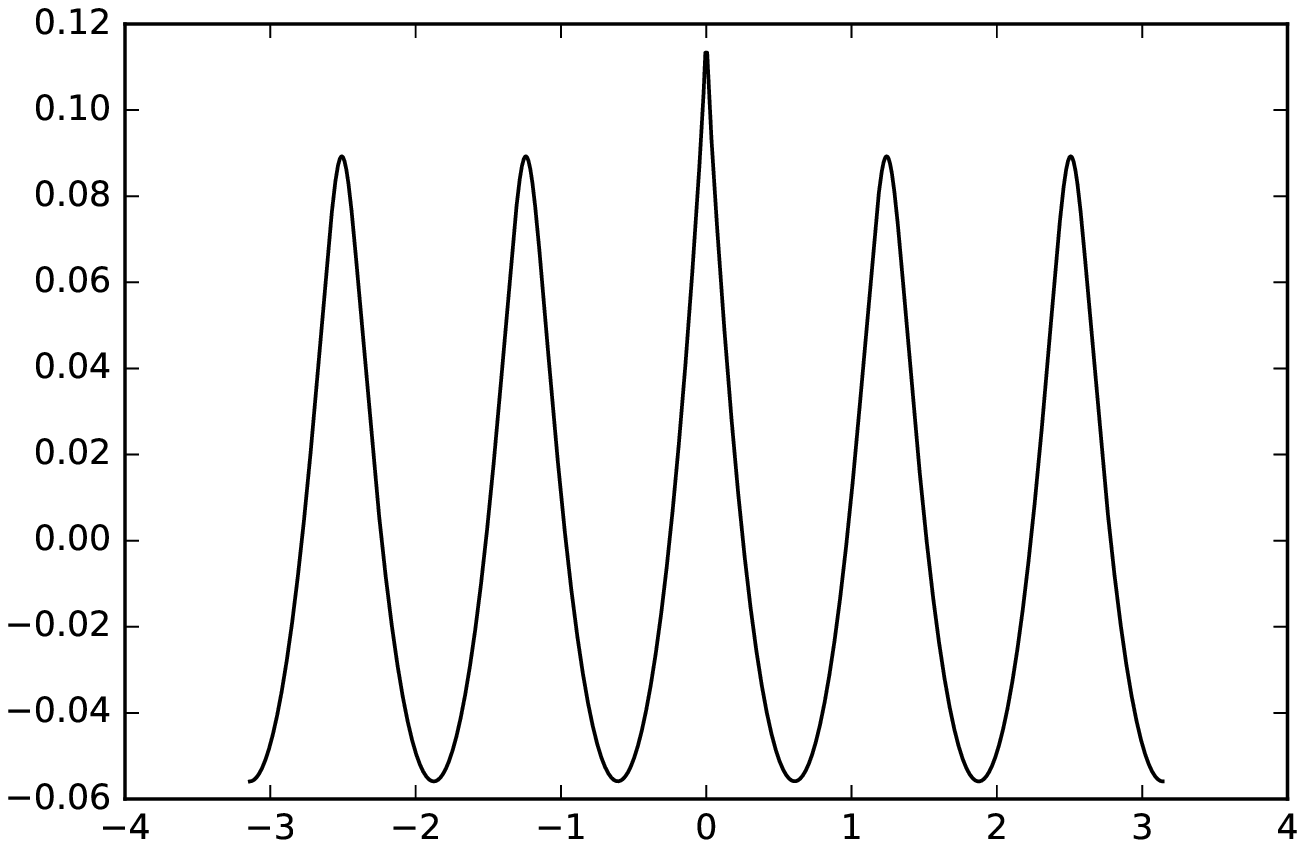}}
\end{center}
\vspace{-6mm} \caption{The profile of extreme wave corresponding to the end-point on
branch $C_{51}$ with $\mu \approx 0.22913$ and $\| w \|_\infty \approx 0.11456$.}
\end{figure}

\begin{figure}
\vspace{-3mm}
\begin{center}
\SetLabels
 \L (0.5*0.01) $x$\\
 \L (-0.005*0.48) $\eta$\\
\endSetLabels 
\leavevmode
\strut\AffixLabels{\includegraphics[width=80mm]{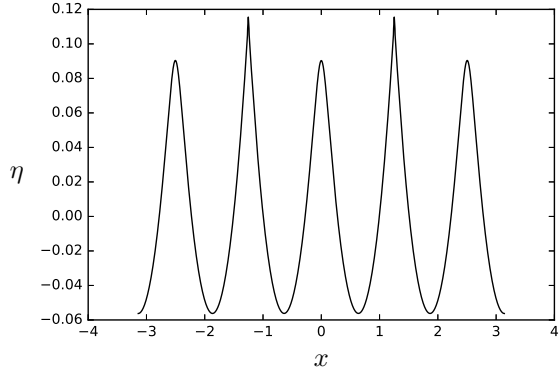}}
\end{center}
\vspace{-5mm} \caption{The profile of extreme wave corresponding to the end-point on
branch $C_{52}$ with $\mu \approx 0.23106$ and $\| w \|_\infty \approx 0.11553$. }
\vspace{-3mm}
\end{figure}

Meanwhile, the version of the software SpecTraVVave used in this paper (it is
available from the authors) includes the Navigator procedure which allows us not
only to select all branches $C_{51}$, $C_{52}$ and $C_{53}$, but also to trace them
up to the upper bound. The wave profiles plotted in Figs.~5--7 have extreme form
(each having at least one crest with the $2 \pi / 3$ angle formed by two smooth
arcs) and correspond to the end-points on each secondary branch.

A characteristic feature of wave profiles on every of these branches concerns the
number of crests per period (the latter is $2 \pi$---five times greater than the
period of waves on $C_{5}$), that are higher than the others. Indeed, only one crest
is higher than the others (which are equal) on wave profiles corresponding to
solutions on branch $C_{51}$. Two equal crests are higher than the other three
(which are again equal) on profiles corresponding to solutions on $C_{52}$. The
remaining option---three equal crests exceeding the other two which are also
equal---is realised on branch $C_{53}$.

\begin{figure}
\vspace{-2mm}
\begin{center}
\SetLabels
 \L (0.5*0.01) $x$\\
 \L (-0.005*0.48) $\eta$\\
\endSetLabels 
\leavevmode
\strut\AffixLabels{\includegraphics[width=80mm]{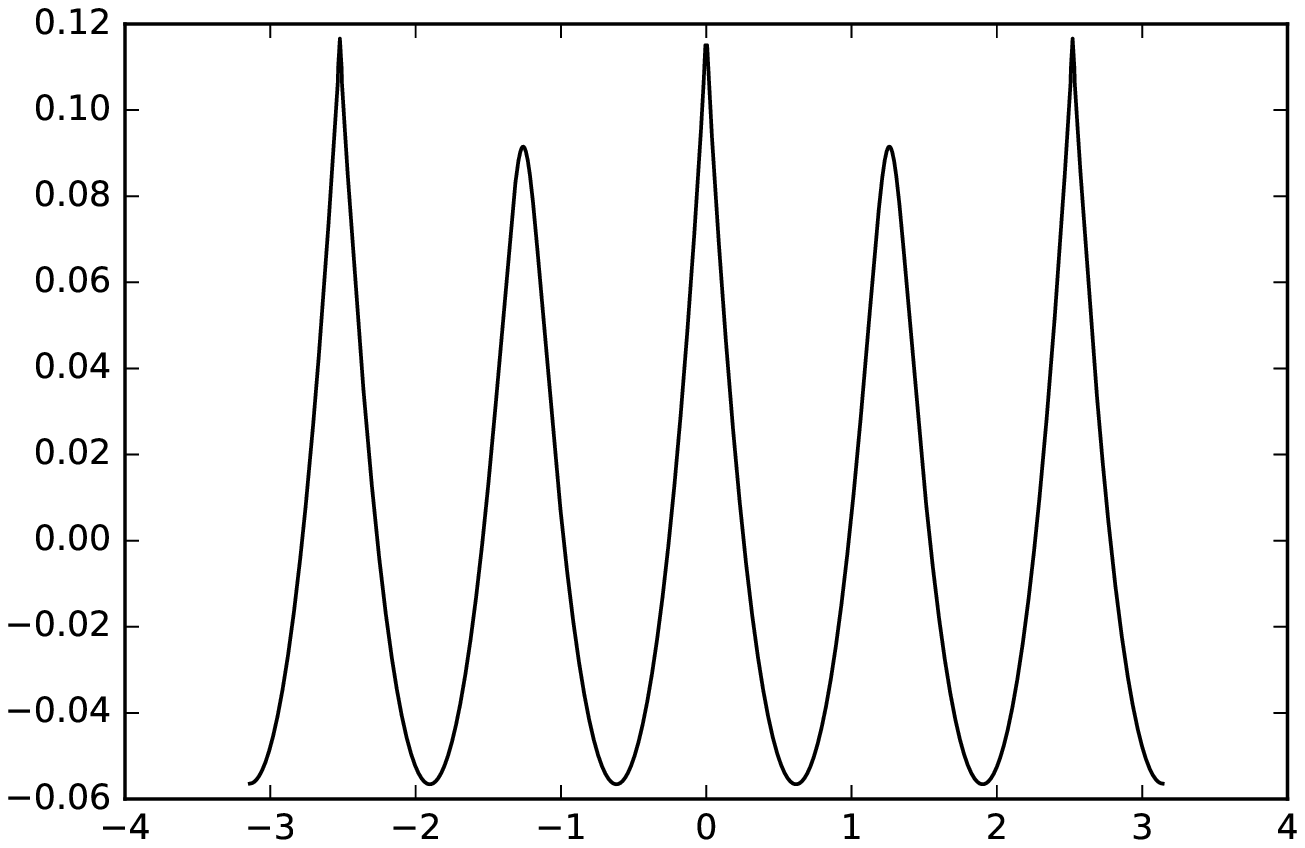}}
\end{center}
\vspace{-5mm} \caption{The profile of extreme wave corresponding to the end-point on
branch $C_{53}$ with $\mu \approx 0.23322$ and $\| w \|_\infty \approx 0.11661$. }
\vspace{-5mm}
\end{figure}

\vspace{-4mm}

\section{Concluding remarks}

We have considered the nonlinear problem describing steady, gravity waves on water
of finite depth. In our previous paper \cite{KD}, this problem had been reduced to a
single pseudo-differential operator equation \eqref{37} (Babenko's equation);
moreover, it had been demonstrated that this equation has an equivalent form
\eqref{spectralBabenko} which is more convenient for numerical solution.

Here, an operator equation (modified Babenko's equation) equivalent to
\eqref{spectralBabenko} has been derived, namely, \eqref{mod_Babenko}. Its advantage
is that the operator involved (it replaces $\mathcal{B}_r \, \D / \D t$ used in
\eqref{37} and in \eqref{spectralBabenko}) depends directly on the mean depth of
water $h$, whereas the parameter $r$ has no direct hydrodynamic meaning. On the
other hand, the new operator is nonlinear which is a drawback.

However, the algorithm developed for numerical solution of modified Babenko's
equation (it is a modification of the free software SpecTraVVave; see \cite{MVK})
allows us to tackle with this drawback very efficiently. Moreover, the developed
numerical procedure is not only very fast, but also has a remarkably high accuracy.

\vspace{2mm}

\noindent {\bf Acknowledgements.}

\noindent The authors are grateful to Henrik Kalisch without whose support the paper
would not appear. E.\,D. acknowledges the support from the Norwegian Research
Council.

\vspace{-2mm}

{\small

}


\begin{thebibliography}{99}

\bibitem{KD} Kuznetsov, N., Dinvay, E. Babenko's equation for periodic gravity waves
on water of finite depth: derivation and numerical solution. {\sl Water Waves} {\bf
1} (2019), 41--70.

\bibitem{Ben} Benjamin, T.\,B. Verification of the Benjamin--Lighthill conjecture
about steady water waves. {\sl J. Fluid Mech.} {\bf 295} (1995), 337--356.

\bibitem{CN} Craig, W., Nicholls, D.\,P. Travelling gravity water waves in two and
three dimensions. {\sl European J. Mech. B/Fluids} {\bf 21} (2002), 615--641.

\bibitem{B} Babenko, K.\,I. Some remarks on the theory of surface waves of finite
amplitude. {\sl Soviet Math. Doklady} {\bf 35} (1987), 599--603.

\bibitem{BB} K. I. Babenko, A local existence theorem in the theory of surface waves
of finite amplitude. {\sl Soviet Math. Doklady} {\bf 35} (1987), 647--650.

\bibitem{OS} Okamoto, H., Sh\={o}ji, M. {\sl The Mathematical Theory of Permanent
Progressive Water-Waves}. World Scientific, 2001.

\bibitem{BDT1} Buffoni, B., Dancer, E.\,N., Toland, J.\,F. The regularity and local
bifurcation of steady periodic waves. {\sl Arch. Ration. Mech. Anal.} {\bf 152}
(2000), 207--240.

\bibitem{BDT2} Buffoni, B., Dancer, E.\,N., Toland, J.\,F. The sub-harmonic
bifurcation of Stokes waves. {\sl Arch. Ration. Mech. Anal.} {\bf 152} (2000),
241--271.

\bibitem{Z} Zygmund, A. {\sl Trigonometric Series, I \& II}. Cambridge University
Press, 1959.

\bibitem{CSV} Constantin, A., Strauss, W., V\u{a}rv\u{a}ruc\u{a}, E. Global
bifurcation of steady gravity water waves with critical layers. {\sl Acta Math.} {\bf
217} (2016), 195--262.

\bibitem{E} Evgrafov, M.\,A. {\sl Analytic functions.} Dover, 1978.

\bibitem{CR} Crandall, M.\,G., Rabinowitz, P.\,H. Bifurcation from simple
eigenvalues. {\sl J. Func. Anal.} {\bf 8} (1971), 321--340.

\bibitem{MVK} Moldabayev, D., Verdier, O. \& Kalisch, H. SpecTraVVave. (2018) Free
software is available at https://github.com/olivierverdier/SpecTraVVave

\bibitem{KMV} Kalisch, H., Moldabayev, D. \& Verdier, O. A numerical study of
nonlinear dispersive wave models with SpecTraVVave. {\sl Electronic J. Diff.
Equations} {\bf 2017} (2017), 1--23.

\bibitem{Tan} Tanaka, M. The stability of steep gravity waves. {\sl J. Phys. Soc.
Japan} {\bf 52} (1983), 3047--3055.

\bibitem{Saf} Saffman, P.\,G. The superharmonic instability of finite amplitude
water waves. {\sl J. Fluid Mech.} {\bf 159} (1985), 169--174.

\bibitem{CS} Craig, W., Sulem, C. Numerical simulation of gravity waves. {\sl J.
Comput. Phys.} {\bf 108} (1993), 73--83.

\bibitem{CD} Clamond, D., Dutykh, D. Accurate fast computation of steady
two-dimensional surface gravity waves in arbitrary depth. {\sl J. Fluid Mech.} {\bf
844} (2018), 491--518.

\bibitem{A} Aston, P.\,J. Analysis and computation of symmetry-breaking bifurcation
and scaling laws using group theoretic methods. {\sl SIAM J. Math. Anal.} {\bf 22}
(1991), 181--212.

\end{thebibliography}
\end{document}